\documentclass[letterpaper, 10 pt, conference]{ieeeconf}
\IEEEoverridecommandlockouts
\usepackage{amsmath,amssymb,amsfonts}
\usepackage{algorithmic}
\usepackage{algorithm,algorithmic}
\usepackage{textcomp}
\usepackage{bbold}
\usepackage{url}
\usepackage{cite}
\allowdisplaybreaks
\usepackage[hidelinks]{hyperref}
\usepackage[noabbrev,capitalise,nameinlink]{cleveref}
\usepackage{nicefrac, xfrac}

\usepackage{graphicx}
\graphicspath{ {./Figures/} }
\usepackage{subcaption}

\usepackage{siunitx}
\DeclareSIUnit\year{yr}
\NewDocumentCommand\SIpi{}{\text{\ensuremath{\pi}}}
\usepackage[dvipsnames]{xcolor}

\usepackage[draft]{changes}

\usepackage{aliascnt}
\newtheorem{theorem}{Theorem}

\newaliascnt{corollary}{theorem}

\aliascntresetthe{corollary}
\crefname{corollary}{Corollary}{Corollaries}

\newaliascnt{lemma}{theorem}
\newtheorem{lemma}[lemma]{Lemma}
\aliascntresetthe{lemma}
\crefname{lemma}{Lemma}{Lemmas}

\newtheorem{remark}{Remark}
\newtheorem{assumption}{Assumption}

\usepackage[short]{optidef}

\usepackage{tikz}
\usetikzlibrary{positioning}
\usetikzlibrary{matrix}
\usetikzlibrary {arrows.meta}
\usetikzlibrary{shapes}
\usetikzlibrary{calc}
\usetikzlibrary{math}
\usetikzlibrary{decorations.markings}
\usetikzlibrary{angles,quotes}
\usepackage[european,straightvoltages]{circuitikz}

\newcommand{\lha}[1]{\textcolor{black}{#1}} 
\newcommand{\lhb}[1]{\textcolor{black}{#1}} 
\newcommand{\lhc}[1]{\textcolor{black}{#1}} 
\newcommand{\icl}[1]{\textcolor{black}{#1}} 

\title{\LARGE \bf
Unstable Poles Arising 
\icl{in AC Power Grid} Subsystem Representations}

\author{Liam Hallinan and Ioannis Lestas%
    \thanks{L. Hallinan and I. Lestas are with the Department of Engineering, University of Cambridge, Trumpington Street, Cambridge, CB2 1PZ, United Kingdom. Emails:
        {\tt\small <lh706, icl20>@cam.ac.uk}.}%
}

\begin{document}
\maketitle

\begin{abstract}
    Recent small-signal stability studies of AC grids have shifted towards analysing power systems as interconnections of 
    \icl{subsystems} and leveraging their input-output properties to derive scalable stability certificates. Two subsystem representations appear frequently in the literature: the $PQ$ model, coupling powers to phase angle and voltage magnitude, and the $IV$ model, coupling \lhb{currents to voltages}. In this paper, we derive both models without simplifying the bus or line dynamics and show that a loop transformation relates the two. \icl{One of the main results in the paper is to} 
    then show analytically that each representation may \lhb{exhibit unstable poles} depending primarily on the operating point ($IV$ model) or the presence of high-frequency passive dynamics ($PQ$ model). \icl{In particular, such unstable poles in the subsystems can occur even when the aggregate interconnection is stable and well-behaved.}
    These effects are validated numerically, including \lhb{a case study} using the full-order dynamics of a synchronous generator with an exciter and transformer. Our results highlight that care must be taken when choosing a subsystem representation, as neglecting high-frequency dynamics or device operating points may obscure unstable poles that must be stabilised by the network interconnection and \lhb{must be} accounted for in system identification.
\end{abstract}

\section{Introduction}
Power systems are undergoing a profound transformation as synchronous generation is rapidly replaced by grid-forming power electronics, fundamentally changing how the grid is controlled and how its stability is analysed \cite{milano_FoundationsChallenges_18}.
The highly distributed nature of the evolving grid means that classical small-signal techniques, such as eigenvalue analysis applied to fixed, monolithic, system-wide models, are no longer appropriate.

To address these challenges, attention has recently shifted towards analysing power systems as interconnections of distributed subsystems. By studying the open-loop input-output properties of individual subsystems and their interconnection structure, recent work has aimed to certify small-signal stability by imposing scalable constraints on the allowable subsystem dynamics \cite{huang_GainPhase_24, chen_ExtendedFrequencyDomain_24, haberle_DecentralizedParametric_25, watson_ScalableControl_21}. These subsystems are often analysed in the frequency domain, increasing the importance of system identification for capturing complex, fast-timescale device dynamics.

However, the representation of these subsystems is non-unique in general and depends primarily on the port variables chosen to facilitate the interconnection. Two representations commonly arise in the literature using natural coupling variables in AC power systems. The first links active and reactive powers to electrical phase angle and voltage magnitude (the $PQ$ model) and typically appears in the power systems literature with simplified device dynamics. The second links current and voltage (the $IV$ model), is favoured within the power electronics community, and often incorporates more complex dynamics.

In this paper, we argue that the choice of coupling variables plays a significant role when analysing the network interconnection and that care should be taken when using either representation to analyse small-signal stability. In particular, we show that \icl{unstable poles}
may arise in the subsystem transfer functions under common device scenarios in either configuration. The contributions of our paper are as follows:
\begin{enumerate}
    \item We derive two equivalent small-signal representations of the grid as the interconnection of local subsystems: the $PQ$ and $IV$ models. Both are derived
    \lhb{without simplifying assumptions}
    on the bus or line dynamics.
    In particular, we show that a loop transformation can be used to translate from one representation to the other.
    \item We then show analytically, using practically relevant scenarios, that stable subsystems in the $PQ$ model can exhibit instabilities in the $IV$ model, and vice versa. This is revealed using the multivariable Nyquist criterion and is validated using numerical examples. Specifically, an unstable pole may appear in the $IV$ representation \lhb{of a droop-controlled bus} depending primarily on the device operating point, while unstable poles may appear in the $PQ$ model when the device \lhb{is placed in series with a passive impedance}.
    \item We further demonstrate that these effects appear when using a more complex synchronous generator model with exciters and transformers in a multi-machine case study.
\end{enumerate}

These results demonstrate that the choice of subsystem representation should be made with care; in particular, neglecting high-frequency dynamics or the device operating point when performing stability studies or system identification may obscure unstable poles that must be stabilised by the interconnection.

This paper is organised as follows. In \cref{sec:background}, we 
\icl{provide an overview of} necessary mathematical theory used to derive our results. In \cref{sec:model}, we present the complete and unsimplified small-signal $IV$ and $PQ$ grid models, and show how a loop transformation can be used to translate between \lhc{the two representations}. Then, in \cref{sec:instabilities}, we present the analysis and numerical results that demonstrate how instabilities may arise in each configuration under common device scenarios. Finally, we conclude in \cref{sec:conclusion}.

\section{Mathematical Background} \label{sec:background}
\subsection{Notation and Definitions}
Let $\mathbb{R}$ denote the reals, $\mathbb{C}$ denote the complex plane, \lha{$j\mathbb{R}$ denote the imaginary axis}, $\mathbb{C}_+ = \{s\in\mathbb{C}:\Re (s)> 0\}$ denote the open right half-plane, and $\bar{\mathbb{C}}_+ = \mathbb{C}_+ \cup j\mathbb{R} = \{s\in\mathbb{C}:\Re (s)\ge 0\}$ denote the closed right half-plane.

Let $I_n$ and $0_n$ denote the $n \times n$ identity and zero matrices, respectively (the dimension will be omitted when it is clear from the context).
Let $J = \begin{bsmallmatrix} 0 & 1 \\ -1 & 0\end{bsmallmatrix}$, which corresponds to clockwise rotation by $90^\circ$ in $\mathbb{R}^2$.
The Kronecker product of two matrices $A$ and $B$ is denoted $A \otimes B$. For a matrix $M \in \mathbb{C}^{n \times n}$, we denote its spectrum (set of eigenvalues) by $\sigma(M) = \{\lambda_1, \ldots,
\lambda_n\}$.
Let $\mathcal{K}=\{\kappa_1, \ldots, \kappa_{|\mathcal{K}|}\}$ be an ordered set of indexed elements. Then, the direct sum of a set of indexed matrices $B_k \in \mathbb{C}^{m_k \times n_k}$ associated with each $\kappa_k \in \mathcal{K}$ is denoted by $\oplus_{\kappa_k \in \mathcal{K}} B_k = \mathrm{diag} (B_1, \ldots, B_{|\mathcal{K}|})$. In addition, the composite vector constructed from a set of indexed vectors $a_k \in \mathbb{C}^{n_k}$ associated with each $\kappa_k \in \mathcal{K}$ is denoted $[a_k]_{\kappa_k \in \mathcal{K}} =[a_1^T, \ldots,a_{|\mathcal{K}|}^T]^T$.

\subsection{Nyquist Theory}

Let $P(s)$ and $K(s)$ be two proper \icl{real rational} transfer-function matrices \lhb{of dimension $m\times n$ and $n \times m$, respectively}, and let $[K(s),P(s)]$ denote their negative-feedback interconnection. 
\icl{We say} $[K(s),P(s)]$ is \textit{well-posed} if all closed-loop transfer matrices are well defined and proper \cite{zhou_RobustOptimal_96}. \icl{We also say the interconnection is stable if $(I+K(s)P(s))^{-1}$ has no poles in $\bar{\mathbb{C}}_+$. This also implies internal stability if there are no pole \lhc{cancellations} in $\bar{\mathbb{C}}_+$ in the product $K(s)P(s)$.}

The generalised Nyquist stability theorem \cite{maciejowski_MultivariableFeedback_89, desoer_GeneralizedNyquist_80} gives a necessary and sufficient condition for the feedback configuration $[K(s),P(s)]$ to be stable. We define the \textit{Nyquist contour}, illustrated in \cref{fig:tikz_nyquist_std}, as the closed curve on the complex plane containing the imaginary axis, and an infinite\footnote{\icl{The limit $R\to\infty$ in \eqref{eq:nyquist_contour} and \eqref{eq:nyquist_contour_mod} is considered in the compactified complex plane.}} semicircle in the right half-plane joining the positive imaginary axis to the negative imaginary axis \cite{skogestad_MultivariableFeedback_05}:
\begin{equation} \label{eq:nyquist_contour}
    \Gamma_N = \lim_{R \rightarrow \infty} \left\{ \Gamma_{j\omega} \cup \Gamma_R \right\},
\end{equation}
where
        $ \Gamma_{j\omega} = \left\{ s = j\omega \ | \ \omega \in \left[-R,R \right] \right\}$,
        $ \Gamma_R = \left\{s =  Re^{j \varphi} \ \big| \ \varphi \in \left[ \frac{\pi}{2},- \frac{\pi}{2} \right] \right\}$.
Furthermore, we define the \textit{modified Nyquist contour} with an indentation around the origin in $\mathbb{C}_+$ as
\begin{equation} \label{eq:nyquist_contour_mod}
        \Gamma_N^{\mathrm{mod}} = \lim_{\icl{R \rightarrow \infty}}
        \left\{ \Gamma_{j\omega}^- \cup \Gamma_\epsilon \cup \Gamma_{j\omega}^+ \cup \Gamma_R \right\},
    \end{equation}
where
    $ \Gamma_{j\omega}^- = \left\{ s = j\omega \ | \ \omega \in \left[-R,-\epsilon \right] \right\}$,
    $ \Gamma_\epsilon = \left\{s =  \epsilon e^{j \varphi} \ \big| \ \varphi \in \left[ -\frac{\pi}{2}, \frac{\pi}{2} \right] \right\}$,
    $ \Gamma_{j\omega}^+ = \left\{ s = j\omega \ | \ \omega \in \left[\epsilon,R \right] \right\}$.
\icl{and $\epsilon>0$ is sufficiently small.}

\begin{figure}[t]
    \centering
    \begin{subfigure}{0.2\textwidth}
        \centering
        \begin{tikzpicture}[>=Latex, line cap=round,scale=0.4]
\small
\def\R{3.2}      
\def\eps{0}   

\tikzset{
  nyqpath/.style={
    very thick,
    postaction={decorate},
    decoration={markings,
      mark=at position 0.15 with {\arrow{>}},
      mark=at position 0.28 with {\arrow{>}},
      mark=at position 0.55 with {\arrow{>}},
      mark=at position 0.9  with {\arrow{>}}
    }
  }
}

\draw[->] (-1.0,0) -- (\R+0.9,0) node[below right] {$\Re$};
\draw[->] (0,-\R-0.9) -- (0,\R+0.9) node[above left] {$\Im$};

\node[above left] at (0,\R) {$j\infty$};
\node[below left] at (0,-\R) {$-j\infty$};

\draw[nyqpath]
  (0,-\R)
    -- (0,\eps)
    arc[start angle=-90, end angle=90, radius=\eps cm]
    -- (0,\R)
    arc[start angle=90, end angle=-90, radius=\R cm];

\node at (\R*1.2,0.5) {$\Gamma_R$};
\node at (-0.75,\R*0.75) {$\Gamma_{j\omega}$};


\end{tikzpicture}
        \caption{$\Gamma_N$.}
        \label{fig:tikz_nyquist_std}
    \end{subfigure}
    \begin{subfigure}{0.2\textwidth}
        \centering
        \begin{tikzpicture}[>=Latex, line cap=round,scale=0.4]
\small
\def\R{3.2}      
\def\eps{0.5}   

\tikzset{
  nyqpath/.style={
    very thick,
    postaction={decorate},
    decoration={markings,
      mark=at position 0.05 with {\arrow{>}},
      mark=at position 0.35 with {\arrow{>}},
      mark=at position 0.55 with {\arrow{>}},
      mark=at position 0.9  with {\arrow{>}}
    }
  }
}

\draw[->] (-1.0,0) -- (\R+0.9,0) node[below right] {$\Re$};
\draw[->] (0,-\R-0.9) -- (0,\R+0.9) node[above left] {$\Im$};

\node[above left] at (0,\R) {$j\infty$};
\node[below left] at (0,-\R) {$-j\infty$};

\draw[nyqpath]
  (0,-\R)
    -- (0,-\eps)
    arc[start angle=-90, end angle=90, radius=\eps cm]
    -- (0,\R)
    arc[start angle=90, end angle=-90, radius=\R cm];

\node at (\R*1.2,0.5) {$\Gamma_R$};
\node at (2*\eps,0.5) {$\Gamma_{\epsilon}$};
\node at (-0.75,\R*0.75) {$\Gamma_{j\omega}^+$};
\node at (-0.75,-\R*0.75) {$\Gamma_{j\omega}^-$};


\end{tikzpicture}
        \caption{$\Gamma_N^{\mathrm{mod}}$.}
        \label{fig:tikz_nyquist_mod}
    \end{subfigure}
    \caption{Illustrations of the Nyquist contour and modified Nyquist contour with an indentation at the origin.}
    \label{fig:tikz_nyquist}
\end{figure}

\begin{theorem} \label{thm:nyquist_std}
    Assume the closed-loop system $[K(s),P(s)]$ is well-posed. Let $L(s) = K(s)P(s)$ be the open-loop transfer function and let $L(s)$ have $P_{ol}$ poles in $\mathbb{C}_+$. If $L(s)$ has no poles on $j\mathbb{R}$, let $s$ traverse $\Gamma_N$ in a clockwise direction, and if $L(s)$ has no poles on $j\mathbb{R}$ other than a simple pole at $0$, let $s$ traverse $\Gamma_N^\mathrm{mod}$. Let $N_{\Gamma}$ denote the net number of clockwise encirclements made by the plots of $\sigma(L(s))$ around the point $-1$ as $s$ traverses either $\Gamma_N$ or $\Gamma_N^\mathrm{mod}$ \icl{with $\epsilon\to0$.}
    Then
    \begin{equation} \label{eq:nyquist_pole_count}
        N_\Gamma = \lhb{P_{cl} - P_{ol}},
    \end{equation}
    where $P_{cl}$ gives the number of \icl{poles         
    in $\mathbb{C}_+$ of the closed-loop transfer function \icl{$(I+K(s)P(s))^{-1}$}.}

\end{theorem}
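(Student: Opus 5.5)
The plan is to reduce the statement to the scalar argument principle applied to the return-difference determinant $F(s)=\det(I+L(s))$. Since $L(s)=K(s)P(s)$ is proper and real rational, $F(s)$ is a real rational function of $s$. Well-posedness gives $F(\infty)=\det(I+L(\infty))\neq 0$, so $F$ has no pole or zero on the large semicircle $\Gamma_R$ once $R$ is large enough. There $F(s)$ tends to a nonzero constant and contributes no winding in the limit $R\to\infty$.

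The first step is a pole--zero accounting for $F$ in $\mathbb{C}_+$. Write $L(s)=N(s)D(s)^{-1}$ as a right coprime polynomial factorisation, with $\det D(s)$ having exactly the poles of $L$ (counting McMillan degree). Then
\[
F(s)=\det\bigl(D(s)+N(s)\bigr)\,/\,\det D(s),
\]
and $(I+L)^{-1}=D(D+N)^{-1}$, where $(D+N,D)$ is again right coprime. Hence the closed-loop poles of $(I+L)^{-1}$ in $\mathbb{C}_+$ are the zeros of $\det(D+N)$ there, and the open-loop poles are the zeros of $\det D$ there. It follows that the zeros of $F$ in $\mathbb{C}_+$ number $P_{cl}$ and its poles number $P_{ol}$, with multiplicity. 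Cancellations between numerator and denominator of $F$ are handled by the coprimeness. I expect this bookkeeping to be the main technical obstacle, in particular guaranteeing that no unstable mode is lost in a cancellation. I would cite the standard treatment in \cite{desoer_GeneralizedNyquist_80,maciejowski_MultivariableFeedback_89} rather than redo the Smith--McMillan argument.

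The second step applies the argument principle along the contour. If $L$ has no imaginary-axis poles, take $\Gamma_N$. If $L$ has only a simple pole at $0$, take $\Gamma_N^{\mathrm{mod}}$: the indentation $\Gamma_\epsilon$ excludes that pole from the enclosed region, and for small $\epsilon$ the region still contains every $\mathbb{C}_+$ pole and zero of $F$. Traversing the contour clockwise, the clockwise winding number of $F(s)$ about $0$ equals $Z-P=P_{cl}-P_{ol}$. We also need $F\neq0$ on the contour, i.e.\ no closed-loop poles on $j\mathbb{R}$. Where that fails the encirclement count is undefined; I would treat this as implicit in the statement, since it concerns only the interior count.

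The final step transfers the winding of $\det(I+L)$ to the eigenvalue loci. Since $\det(I+L(s))=\prod_i(1+\lambda_i(s))$, the argument is additive:
\[
\arg F=\sum_i \arg\bigl(1+\lambda_i(s)\bigr).
\]
The eigenvalues $\lambda_i(s)$ are branches of an algebraic function. They may not be individually continuous along the contour, but together they form closed curves (the characteristic loci) once branches are joined at branch points. So the total winding of $F$ about $0$ equals the net number of clockwise encirclements of $-1$ by the plots of $\sigma(L(s))$, namely $N_\Gamma$. Combining this with the second step yields \eqref{eq:nyquist_pole_count}.
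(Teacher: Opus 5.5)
Your route is the standard one, and it is the route the paper relies on. The paper gives no proof of this theorem and cites \cite{desoer_GeneralizedNyquist_80,maciejowski_MultivariableFeedback_89} instead. There the result is obtained by writing $\det(I+L(s))$, up to a nonzero constant, as a ratio of closed-loop to open-loop characteristic polynomials, applying the argument principle on the clockwise contour, and distributing the winding over the characteristic loci. Your right coprime factorisation is an equivalent way of producing that ratio. The rest of your argument is fine: $\Gamma_R$ handled via well-posedness, the indentation at the origin, the implicit requirement $\det(I+L)\neq 0$ on the contour, and the loci as closed curves.

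There is one false claim in your first step, exactly where you expected trouble. Coprimeness of $(N,D)$ does ensure that $\det D$ and $\det(D+N)$ have $P_{ol}$ and $P_{cl}$ roots in $\mathbb{C}_+$, counted with McMillan multiplicity. It does not prevent the two polynomials from sharing roots. So it is not true in general that $F$ has $P_{cl}$ zeros and $P_{ol}$ poles in $\mathbb{C}_+$, and cancellations are not ``handled by the coprimeness''.

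For example, take $L(s)=\mathrm{diag}\bigl(2/(s-1),\,-2/(s+1)\bigr)$ with the right coprime pair $N=\mathrm{diag}(2,-2)$, $D=\mathrm{diag}(s-1,s+1)$. Then $\det(D+N)=\det D$, hence $F\equiv 1$, while $P_{ol}=P_{cl}=1$.

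The repair is never to reduce $F$. Apply the argument principle to $\det(D+N)$ and to $\det D$ separately. The clockwise winding of $F=\det(D+N)/\det D$ about $0$ is then the number of enclosed roots of $\det(D+N)$ minus that of $\det D$, i.e.\ $P_{cl}-P_{ol}$, whether or not common factors occur. In the example both sides are $0$. This is also why $P_{ol}$ must be counted from $L$ itself, not from the poles of $\det(I+L)$. With this correction your proof is complete.
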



\section{Grid Model} \label{sec:model}
In this section, we present two equivalent formulations of the \icl{AC grid} model used for small-signal stability analysis: \lha{the $IV$ and $PQ$ models}.
We first introduce \lhb{the} graph structure and reference frames used to \lhb{model the network}. 

\subsection{Topology and reference frames}

The three-phase AC grid is composed of $N_B$ buses, which form the node set $\mathcal{V}_B := \{\nu_1, \ldots, \nu_{N_B}\}$. In addition, we include a node representing ground, denoted $\nu_0$. The buses in $\mathcal{V}_B$ are connected by a set of $N_P$ power lines, which are assigned to the edge set $\mathcal{E}_P \subseteq \mathcal{V}_B \times \mathcal{V}_B$ with an arbitrary orientation. Denote the line from $\nu_j$ (source node) to $\nu_i$ (sink node) by $(\nu_i,\nu_j) \in \mathcal{E}_P$. Loads and other shunt connections link elements of $\mathcal{V}_B$ to the ground node $\nu_0$, and are assigned to the set $\mathcal{E}_0 \subseteq \mathcal{V}_B \times \{\nu_0\}$. If node $\nu_i\in\mathcal{V}_B$ has a load or shunt connection, the edge is directed so that $(\nu_i,\nu_0) \in \mathcal{E}_0$.

Together, the power grid is represented by the directed graph $(\mathcal{V}_N,\mathcal{E}_N)$, where $\mathcal{V}_N = \mathcal{V}_B \cup\{\nu_0\}$ and $\mathcal{E}_N = \mathcal{E}_P \cup \mathcal{E}_0 $.
\begin{assumption} \label{assump:connected}
    The subgraph $(\mathcal{V}_B,\mathcal{E}_P) \subseteq (\mathcal{V}_N,\mathcal{E}_N)$ is connected.
\end{assumption}

For each $\nu_i \in \mathcal{V}_N$, let $\mathcal{N}_i \subseteq \mathcal{V}_N$ be the set of neighbours of the node $\nu_i$. We also define $\mathcal{N}_i^+ \subseteq \mathcal{N}_i$ as the set of nodes $\nu_j$ such that $(\nu_i,\nu_j)\in\mathcal{E}_N$ (i.e., $\nu_i$ is the sink node), and $\mathcal{N}_i^- \subseteq \mathcal{N}_i$ as the set of nodes $\nu_j$ such that $(\nu_j,\nu_i) \in \mathcal{E}_N$ (i.e., $\nu_i$ is the source node).

For each edge $(\nu_i,\nu_j) \in \mathcal{E}_P$ (i.e., the power lines only), assign an index $k$ so that $\varepsilon_k \equiv (\nu_i,\nu_j)$ and order the set so that $\mathcal{E}_P = \{\varepsilon_1,\ldots,\varepsilon_{N_P}\}$. The interconnection structure of the power system is then captured by the incidence matrix $\mathcal{B}_N\in\mathbb{R}^{N_B \times N_P}$ for the bus--line subgraph $(\mathcal{V}_B,\mathcal{E}_P)$, with $(i,k)^{\mathrm{th}}$ entry given by
\begin{equation} \label{eq:incidence_matrix}
    \mathcal{B}_N^{(i,k)} = \begin{cases}
        1, & \text{if $\varepsilon_k \equiv (\nu_i,\nu_j) \in \mathcal{E}_P$, for some $\nu_j$}, \\
        -1, & \text{if $\varepsilon_k \equiv (\nu_j,\nu_i) \in \mathcal{E}_P$, for some $\nu_j$}, \\
        0, & \text{otherwise.}
    \end{cases}
\end{equation}

At each bus and each edge of the power system $(\mathcal{V}_N,\mathcal{E}_N)$, we associate a three-phase voltage $v_i(t) \in \mathbb{R}^3$ and net current injection $i_i(t) \in \mathbb{R}^3$ whose dynamics are defined by the bus and network system discussed in the following sections. We make the following assumption for the AC power system.
\begin{assumption}
    All currents and voltages are balanced three-phase signals.
\end{assumption}
Therefore all currents and voltages can be projected into a set of synchronously rotating $DQ$ reference frames
\cite{kundur_PowerSystem_22, schiffer_SurveyModeling_16, spanias_SystemReference_19}.

We first define a \textit{common} $DQ$ reference frame rotating at constant angular frequency $\omega_0 >0$, with instantaneous electrical phase angle $\theta_0(t) \in [0,2\pi)$ defining the angle between the $D$-axis and a fixed reference frame. A balanced three-phase signal $x(t) \in \mathbb{R}^3$ evaluated in this frame is denoted $x^{DQ}(t)=[x^D(t),x^Q(t)]^T \in \mathbb{R}^2$.
Similarly, with each bus $\nu_i \in \mathcal{V}_B$, we associate a \textit{local} $dq$ reference frame rotating at the dynamic local angular frequency $\omega_i(t)>0$, with instantaneous electrical phase angle $\theta_i(t) \in [0,2\pi)$ giving the angle between the $d$-axis and the fixed reference frame. The same signal $x(t)$ can be translated into the local $dq$ frame associated with $\nu_i\in\mathcal{V}_B$ to give $x^{dq}(t)=[x^d(t),x^q(t)]^T \in \mathbb{R}^2$.

The difference in phase angles between the local reference frame associated with $\nu_i \in \mathcal{V}_B$ and the common reference frame is
$ 
    \delta_i(t) = \theta_i(t) - \theta_0(t),
$ 
giving
\begin{equation} \label{eq:delta_dot}
    \dot{\delta}_i(t) = \omega_i(t) - \omega_0,
\end{equation}
\lhb{which implies} that $\omega_i(t) = \omega_0$ at equilibrium.
Therefore, we can map between the reference frames as
$ 
    x^{DQ}(t) = T(\delta_i(t))x^{dq}(t),
$ 
where
\begin{equation} \label{eq:rotation_matrix}
    T(\delta_i(t)) = \begin{bmatrix}
        \cos(\delta_i(t)) & -\sin(\delta_i(t)) \\ \sin(\delta_i(t)) & \cos(\delta_i(t))
    \end{bmatrix}.
\end{equation}
Without loss of generality, we \lhb{define} the local $d$-axis \lhb{to be aligned} with the bus voltage vector so that $v_i^{dq}(t) = [V_i(t), 0]^T$, where $V_i(t) > 0$ is the voltage magnitude. Therefore, using \eqref{eq:rotation_matrix}, we obtain
\begin{equation} \label{eq:vDQ_in_polar}
    \begin{bmatrix}
        v_i^D(t) \\ v_i^Q(t)
    \end{bmatrix} = \begin{bmatrix}
        V_i(t) \cos(\delta_i(t)) \\ V_i(t) \sin(\delta_i(t))
    \end{bmatrix}.
\end{equation}
Unless it is otherwise ambiguous, we henceforth omit the argument $t$ from time-domain signals.


Finally, we define the net active- and reactive-power flows at bus $\nu_i \in \mathcal{V}_B$, denoted $P_i$ and $Q_i$, respectively, as
\begin{subequations} \label{eq:powers}
    \begin{align}
        P_i &= {v_i^{DQ}}^T i_i^{DQ} = v_i^Di_i^D + v_i^Qi_i^Q, \label{eq:active_power} \\
        Q_i & = -{v_i^{DQ}}^T J i_i^{DQ} = v_i^Qi_i^D - v_i^Di_i^Q. \label{eq:reactive_power}
    \end{align}
\end{subequations}
We note that $P_i$ and $Q_i$ are invariant under reference frame transformations as $T(\delta_i)^T = T(\delta_i)^{-1}$ \lhb{and $T(\delta_i)^T J T(\delta_i) = J$}.

\subsection{IV Model}
In this section, we model the AC power system as the feedback interconnection of \lhb{bus} impedances and a sparse network admittance using voltages and currents at the interconnection ports.

\subsubsection{Buses}
At each bus $\nu_i \in \mathcal{V}_N$, we consider a dynamical system whose output is the bus voltage in the common reference frame, denoted $v_i^{DQ} \in \mathbb{R}^2$. The input to each bus is the negative net current injection in the common reference frame, denoted $-i_i^{DQ} \in \mathbb{R}^2$, which is given in terms of the neighbouring branches using Kirchhoff's current law
\begin{equation} \label{eq:kcl}
    i_i^{DQ} = \textstyle \sum_{\nu_j \in \mathcal{N}_i^+} i_{ij}^{DQ} - \sum_{\nu_j \in \mathcal{N}_i^-} i_{ji}^{DQ},
\end{equation}
where $i_{ij}^{DQ} \in \mathbb{R}^2$ is the current through the branch $(\nu_i,\nu_j) \in \mathcal{E}_N$ flowing into $\nu_i$ from $\nu_j$.
Note that a double index on $i_{ij}^{DQ}$ denotes a branch current associated with the line/load $(\nu_i,\nu_j) \in \mathcal{E}_N$, while a single index on $i_i^{DQ}$ denotes a net nodal current injection associated with $\nu_i \in \mathcal{V}_N$.

The relationship between $-i_i^{DQ}$ and $v_i^{DQ}$ is
\lhb{determined by the dynamics of the connected device, }
such as a synchronous generator or a grid-forming inverter. As we are interested in the small-signal stability of the grid $(\mathcal{V}_N,\mathcal{E}_N)$, we linearise each system about its equilibrium operating point, its equilibrium input $-i_i^{DQ\star}$, and its equilibrium output $v_i^{DQ\star}$, taking into account the translation from local $dq$ frame to common $DQ$ frame. We define the bus impedance $Z_i(s)$ as the $2 \times 2$ transfer function for the following Laplace-domain input--output relation:
\begin{equation} \label{eq:bus_iv_relationship}
    \Delta v_i^{DQ}(s) = - Z_i(s) \Delta i_i^{DQ}(s).
\end{equation}
Here, $\Delta v_i^{DQ}(s)$ and $\Delta i_i^{DQ}(s)$ are the Laplace-domain deviations of $v_i^{DQ}$ and $i_i^{DQ}$ about their operating points, respectively.

As a special case, we model the ground node $\nu_0$ as \lhb{a} constant voltage source with zero output, meaning $\Delta v_0^{DQ}(s) = [0, 0]^T$.

\subsubsection{Network}
We associate with each line or load $(\nu_i,\nu_j) \in \mathcal{E}_N$ a subsystem whose input is the potential difference across the element in the common $DQ$ reference frame, given by the difference in the voltages of the adjacent buses $v_i^{DQ} - v_j^{DQ}$, and whose output is the associated branch current, $i_{ij}^{DQ}$. In the Laplace domain, we obtain
\begin{equation} \label{eq:line_dynamics_s}
    \Delta i_{ij}^{DQ}(s) = Y_{ij}(s) \left( \Delta v_i^{DQ}(s) - \Delta v_j^{DQ}(s) \right),
\end{equation}
where $Y_{ij}(s)$ is the branch admittance for the line $(\nu_i,\nu_j)\in\mathcal{E}_N$. For a line/load represented by series $RL$ components, we have
\begin{equation} \label{eq:line_admittance}
    Y_{ij}(s) = \begin{bmatrix}
        R_{ij}+X_{ij}\frac{s}{\omega_0} & - X_{ij} \\ X_{ij} & R_{ij}+X_{ij}\frac{s}{\omega_0}
    \end{bmatrix}^{-1},
\end{equation}
where $R_{ij} > 0$ is the line/load resistance and $X_{ij} > 0$ is the line/load reactance, both measured in p.u.

For a \lhb{load or shunt connection in} $\mathcal{E}_0$, we have $j=0$ and hence $\Delta v_j^{DQ}(s) = [0,0]^T$. We therefore define
\begin{equation} \label{eq:shunt_admittance}
    Y_i(s) \equiv Y_{i0}(s),
\end{equation}
where we set $Y_{i0}(s)$ to $0_2$ if the corresponding branch is absent.

Now, let $\Delta v_B^{DQ}(s) = [\Delta v_i^{DQ}(s)]_{\nu_i \in \mathcal{V}_B}$ and $\Delta i_B^{DQ}(s) = [\Delta i_i^{DQ}(s)]_{\nu_i \in \mathcal{V}_B}$. Using \eqref{eq:line_dynamics_s} in Kirchhoff's current law \eqref{eq:kcl} at each node $\nu_i \in \mathcal{V}_B$ and using the fact that $Y_{ji}(s) \equiv Y_{ij}(s)$ \lhb{(by symmetry in the line model)} allows us to write
\begin{equation} \label{eq:kcl_matrix_form}
    \Delta i_B^{DQ}(s) = Y_N(s) \Delta v_B^{DQ}(s),
\end{equation}
where $Y_N(s)$ is the network admittance matrix. The matrix $Y_N(s)$ is composed of $2 \times 2$ blocks, where the $(i,j)^{\mathrm{th}}$ block is
\begin{equation} \label{eq:network_admittance_blocks}
    Y_N^{(i,j)}(s) = \begin{cases}
          Y_i(s) + \!\!\!\!\!\! \sum\limits_{\nu_k \in \mathcal{N}_i \setminus \nu_0} \!\!\!\!\! Y_{ik}(s), & i=j \\
        -Y_{ij}(s), & \text{$(\nu_i,\nu_j)$ or $(\nu_j,\nu_i)\in \mathcal{E}_P$} \\
        0_2, & \text{otherwise,}
    \end{cases}
\end{equation}
where $Y_i(s)$ is given by \eqref{eq:shunt_admittance}. Equivalently,
\begin{equation} \label{eq:network_admittance}
    Y_N(s) = \mathcal{B}_{N2} Y_P(s) \mathcal{B}_{N2}^T + Y_0(s),
\end{equation}
where $\mathcal{B}_{N2} = \mathcal{B}_N \otimes I_2$ (with $\mathcal{B}_N$ defined in \eqref{eq:incidence_matrix}), $Y_P(s) = \oplus_{(\nu_i,\nu_j) \in \mathcal{E}_P} Y_{ij}(s)$ (with block order following that used for $\mathcal{B}_N$ in \eqref{eq:incidence_matrix}), and $Y_0(s) = \oplus_{\nu_i \in \mathcal{V}_B} Y_i(s)$, with $Y_i(s)$ defined in \eqref{eq:shunt_admittance}.


Now \lhb{we assemble the bus impedances into a block-diagonal system.} Let
    $ Z_B(s) = \oplus_{\nu_i \in \mathcal{V}_B} Z_i(s) $.
Then \eqref{eq:bus_iv_relationship} can be written in matrix form as
\begin{equation} \label{eq:kvl_matrix_form}
    \Delta v_B^{DQ} = - Z_B(s) \Delta i_B^{DQ}.
\end{equation}
Together, \eqref{eq:kcl_matrix_form} and \eqref{eq:kvl_matrix_form} form the negative-feedback interconnection $[Y_N(s), Z_B(s)]$ between the bus dynamics and the network admittance matrix, as illustrated in \cref{fig:tikz_model_iv}.

\begin{figure}[t]
\centering
\begin{tikzpicture}[scale=1]
\small

\node[matrix, draw,
    row sep=1pt, column sep=1pt,
    inner sep=1pt]
    (sys_Z) at (0,0)
    {
        \node{$Z_1(s)$}; & & \\
        & \node{$\ddots$}; & \\
        & & \node{$Z_{N_B}(s)$}; \\
    };
    \node at (sys_Z.south east)[right=3pt,yshift=6pt]{$Z_B(s)$};
\node[matrix, draw=black,
    below=of sys_Z,yshift=+20pt,
    row sep=1pt, column sep=1pt,
    inner sep=1pt]
    (sys_Yp) 
    {
        \node{$Y_{ij}(s)$}; & & \\
        & \node{$\ddots$}; & \\
        & & \node{$Y_{pq}(s)$}; \\
    };
    \node at (sys_Yp.south east)[right=3pt,yshift=1pt]{$Y_P(s)$};
\node[matrix, draw=black,
    below=of sys_Yp,yshift=+20pt,row sep=1pt, column sep=1pt,
    inner sep=1pt]
    (sys_Y0) 
    {
        \node{$Y_1(s)$}; & & \\
        & \node{$\ddots$}; & \\
        & & \node{$Y_{N_B}(s)$}; \\
    };
    \node at (sys_Y0.south east)[right=3pt,yshift=6pt]{$Y_0(s)$};

\node[draw, circle, minimum size=0.75cm,
            left=of sys_Yp,xshift=20pt]
            (B)  
            {$\mathcal{B}_{N2}$};
\node[draw, circle, minimum size=0.75cm,
            right=of sys_Yp,xshift=-20pt]
            (Bt)   {$\mathcal{B}_{N2}^T$};

\node[draw, circle, minimum size=0.05cm,
    left=of B,xshift=15pt]
    (left1) 
    {};
    \node[yshift=5pt,xshift=5pt] at (left1.north east){$+$};
    \node[yshift=-5pt,xshift=5pt] at (left1.south east){$+$};
\node[right= of Bt,xshift=-20pt] (right1){};
\node[right= of right1,xshift=-15pt] (right2){};
\node[draw, circle, minimum size=0.05cm]
    (neg) at (-4,0)
    {};
    \node[yshift=-5pt,xshift=5pt] at (neg.south east){$-$};

\node[draw,black,thin,dotted, anchor=north west,
    minimum width =7cm, minimum height =3.5cm]
    (box) at (-3.6,-1.0) {};
    \node at (box.south east)[right=3pt,yshift=6pt]{$Y_N(s)$};

\draw[-stealth] (neg.east) -- (sys_Z.west)
    node[midway,above]{$-\Delta i_B^{DQ}$};
\draw[] (sys_Z.east) -| (right2.center)
    node[pos=0.25,above]{$\Delta v_B^{DQ}$};
\draw[-stealth] (right2.center) -- (Bt.east);
\draw[-stealth] (right1.center) |- (sys_Y0.east);
\draw[-stealth] (Bt.west) -- (sys_Yp.east);
\draw[-stealth] (sys_Yp.west) -- (B.east);
\draw[-stealth] (B.west) -- (left1.east);
\draw[-stealth] (sys_Y0.west) -| (left1.south);
\draw[-stealth] (left1.west) -| (neg.south);

\end{tikzpicture}
\caption{Negative-feedback interconnection of the bus impedances in $Z_B(s)$ and the network admittance $Y_N(s)$.}
\label{fig:tikz_model_iv}
\end{figure}

\subsection{PQ Model}
As an alternative to the $IV$ model, we now model the grid as a feedback interconnection coupling the bus active and reactive powers to electrical phase angle and voltage magnitude.

First, at each $\nu_i \in \mathcal{V}_B$, we linearise \eqref{eq:powers} with respect to $v_i^{DQ}$ and $i_i^{DQ}$ and convert to the Laplace domain to get
\begin{equation} \label{eq:pq_deriv_1}
    \begin{aligned}
        \underbrace{\begin{bmatrix}
            \Delta P_i(s) \\ \Delta Q_i(s)
        \end{bmatrix}}_{\Delta S_i(s)}
        & = \begin{bmatrix}
            i_i^{D\star} & \!\!\!\!\! i_i^{Q\star}  \\ -i_i^{Q\star} & \!\!\!\!\! i_i^{D\star}
        \end{bmatrix} \Delta v_i^{DQ}(s) \! + \! \underbrace{\begin{bmatrix}
            v_i^{D\star} & \!\!\!\!\! v_i^{Q\star} \\ v_i^{Q\star} & \!\!\!\!\!- v_i^{D\star}
        \end{bmatrix}}_{U_i^\dagger} \Delta i_i^{DQ}(s),
    \end{aligned}
\end{equation}
where $(\cdot)^\star$ represents the equilibrium value of the respective variable. Next, linearising \eqref{eq:vDQ_in_polar} gives
\begin{equation} \label{eq:pq_deriv_2}
    \begin{aligned}
        \Delta v_i^{DQ}(s)
        & = \begin{bmatrix}
            -V_i^\star \sin(\delta_i^\star) & \cos(\delta_i^\star) \\
            V_i^\star \cos(\delta_i^\star) & \sin(\delta_i^\star)
        \end{bmatrix} \begin{bmatrix}
            \Delta \delta_i(s) \\ \Delta V_i(s)
        \end{bmatrix} \\
        & = \underbrace{\begin{bmatrix}
            -v_i^{Q\star} & v_i^{D\star} \\ v_i^{D\star} & v_i^{Q\star}
        \end{bmatrix}}_{U_i^\ddagger} \underbrace{\begin{bmatrix}
            \Delta \delta_i(s) \\ \frac{\Delta V_i(s)}{V_i^\star}
        \end{bmatrix}}_{\Delta \phi_i(s)}.
    \end{aligned}
\end{equation}
Using \eqref{eq:pq_deriv_2}, the first term in \eqref{eq:pq_deriv_1} gives
\begin{equation} \label{eq:pq_deriv_3}
    \begin{aligned}
        \begin{bmatrix}
                i_i^{D\star} & i_i^{Q\star} \\ -i_i^{Q\star} &  i_i^{D\star}
            \end{bmatrix} \Delta v_i^{DQ}(s)
            & = \underbrace{\begin{bmatrix}
                -Q_i^\star & P_i^\star \\ P_i^\star & Q_i^\star
            \end{bmatrix}}_{W_i} \Delta \phi_i(s).
    \end{aligned}
\end{equation}
Now, let $U^\dagger = \oplus_{\nu_i \in \mathcal{V}_B}U_i^\dagger$, $U^\ddagger = \oplus_{\nu_i \in \mathcal{V}_B}U_i^\ddagger$, $W = \oplus_{\nu_i \in \mathcal{V}_B}W_i$, $\Delta S(s) = [\Delta S_i(s)]_{\nu_i \in \mathcal{V}_B}$, and $\Delta \phi(s) = [\Delta \phi_i(s)]_{\nu_i \in \mathcal{V}_B}$. Then, using \eqref{eq:network_admittance}, the network dynamics defining the active and reactive power injections at each bus \lhb{are} given by
\begin{equation}
    \begin{split}
        \Delta S(s)
        & = U^\dagger Y_N(s) \Delta v^{DQ}(s) +  W \Delta \phi(s) \\
        & = (U^\dagger Y_N(s) U^\ddagger + W) \Delta \phi(s).
    \end{split}
\end{equation}
The network dynamics are therefore given by the transfer function
\begin{equation} \label{eq:N_def}
    N_{PQ}(s) := U^\dagger Y_N(s) U^\ddagger + W.
\end{equation}

\begin{figure}[t]
\centering
\begin{tikzpicture}
\small


\node[matrix, draw,
    row sep=1pt, column sep=1pt,
    inner sep=1pt]
    (sys_Z) at (0,0)
    {
        \node{$G_1(s)$}; & & \\
        & \node{$\ddots$}; & \\
        & & \node{$G_{N_B}(s)$}; \\
    };
    \node at (sys_Z.south east)[right=3pt,yshift=6pt]{$G_B(s)$};

\node[draw=black,
    below=of sys_Z,yshift=0.8cm,
    minimum size = 1cm]
    (sys_Yp) 
    {$Y_N(s)$};
    
\node[draw=black,
    below=of sys_Yp,yshift=0.5cm]
    (sys_Y0) 
    {$ \bigoplus\limits_{\nu_i \in \mathcal{V}_B} \begin{bsmallmatrix}
                -Q_i^\star & P_i^\star \\ P_i^\star & Q_i^\star
            \end{bsmallmatrix}$};
    \node at (sys_Y0.south east)[right=3pt,yshift=6pt]{$W$};

\node[draw=black,
    left=of sys_Yp,xshift=0.8cm]
    (B) 
    {$\bigoplus\limits_{\nu_i \in \mathcal{V}_B}\begin{bsmallmatrix}
                v_i^{D\star} & v_i^{Q\star} \\ v_i^{Q\star} & - v_i^{D\star}
            \end{bsmallmatrix}$};
    \node at (B.south east)[left=1pt,yshift=-6pt]{$U^\dagger$};
\node[draw=black,
    right=of sys_Yp,xshift=-0.8cm]
    (Bt) 
    {$\bigoplus\limits_{\nu_i \in \mathcal{V}_B} \begin{bsmallmatrix}
                -v_1^{Q\star} & v_1^{D\star} \\ v_1^{D\star} & v_1^{Q\star}
            \end{bsmallmatrix}$};
    \node at (Bt.south east)[left=1pt,yshift=-6pt]{$U^\ddagger$};

\node[draw, circle, minimum size=0.05cm,
    left=of B,xshift=0.5cm]
    (left1) 
    {};
    \node[yshift=5pt,xshift=5pt] at (left1.north east){$+$};
    \node[yshift=-5pt,xshift=5pt] at (left1.south east){$+$};
\node[right= of Bt,xshift=-0.8cm] (right1){};
\node[right= of right1,xshift=-1cm] (right2){};
\node[draw, circle, minimum size=0.05cm]
    (neg) at (-4.4,0)
    {};
    \node[yshift=-5pt,xshift=5pt] at (neg.south east){$-$};

\node[draw,black,thin,dotted,anchor=north west,
    minimum width =8.2cm, minimum height =2.6cm]
    (box) at (-4.3,-0.9) {};
    \node at (box.south east)[left=3pt,yshift=7pt]{$N_{PQ}(s)$};

\draw[-stealth] (neg.east) -- (sys_Z.west)
    node[pos=0.5,above]{$-\begin{bmatrix} \Delta P_i(s) \\ \Delta Q_i(s)\end{bmatrix}_{\nu_i \in \mathcal{V}_B}$};
\draw[] (sys_Z.east) -| (right2.center)
    node[pos=0.25,above]{$\begin{bmatrix} \Delta \delta_i(s) \\ \nicefrac{\Delta V_i(s)}{V_i^\star} \end{bmatrix}_{\nu_i \in \mathcal{V}_B}$};
\draw[-stealth] (right2.center) -- (Bt.east);
\draw[-stealth] (right1.center) |- (sys_Y0.east);
\draw[-stealth] (Bt.west) -- (sys_Yp.east);
\draw[-stealth] (sys_Yp.west) -- (B.east);
\draw[-stealth] (B.west) -- (left1.east);
\draw[-stealth] (sys_Y0.west) -| (left1.south);
\draw[-stealth] (left1.west) -| (neg.south);

\end{tikzpicture}
\caption{Negative-feedback interconnection of the bus systems in $G_B(s)$ and the network system $N_{PQ}(s)$.}
\label{fig:tikz_model_pq}
\end{figure}

At each bus $\nu_i \in \mathcal{V}_B$, we let
\begin{equation} \label{eq:bus_pq_relationship}
    \begin{bmatrix}
        \Delta \delta_i(s) \\ \frac{\Delta V_i(s)}{V_i^\star}
    \end{bmatrix} = -G_i(s) \begin{bmatrix}
        \Delta P_i(s) \\ \Delta Q_i(s)
    \end{bmatrix},
\end{equation}
and let $G_B(s) = \oplus_{\nu_i \in \mathcal{V}_B} G_i(s)$. From \eqref{eq:delta_dot}, we have $\Delta \delta_i(s) =
\frac{1}{s} \Delta \omega_i(s)$, so $G_i(s)$ takes the form
\begin{equation} \label{eq:pq_transfer_function}
    G_i(s) = \begin{bmatrix}
        \frac{1}{s} G_{\omega p, i}(s) & \frac{1}{s} G_{\omega q,i}(s) \\ G_{vp,i}(s) & G_{vq,i}(s)
    \end{bmatrix}.
\end{equation}
The bus and network systems are then arranged to obtain the negative-feedback interconnection $[N_{PQ}(s),G_B(s)]$, as illustrated in \cref{fig:tikz_model_pq}.

Furthermore, we can link the $IV$ and $PQ$ transfer functions at each $\nu_i \in \mathcal{V}_B$ by substituting \eqref{eq:pq_deriv_1}-\eqref{eq:pq_deriv_3} into \eqref{eq:bus_pq_relationship} to obtain
$
    \Delta v_i^{DQ}(s) = - U_i^\ddagger (I+G_i(s)W_i)^{-1} G_i(s) U_i^\dagger \Delta i_i^{DQ}(s)
$,
which gives
\begin{equation} \label{eq:iv_from_pq}
    Z_i(s) = U_i^\ddagger (I+G_i(s)W_i)^{-1} G_i(s) U_i^\dagger.
\end{equation}
This shows that the $IV$ bus impedance is given by the negative feedback of the $PQ$ transfer function with the static matrix $W_i$ and scaled by the static matrices $U_i^\dagger$ and $U_i^\ddagger$, as illustrated by \cref{fig:tikz_pq_to_iv_fb}. Similarly, using $U^\dagger U^\dagger = U^\ddagger U^\ddagger = {V_i^\star}^2I_2$, we \lhb{invert \eqref{eq:iv_from_pq} to} obtain
\begin{equation} \label{eq:pq_from_iv}
    G_i(s) =  (I-\tilde{Z}_i(s) W_i)^{-1} \tilde{Z}_i(s),
\end{equation}
where $\tilde{Z}_i(s) = \frac{1}{{V_i^\star}^4}U_i^\ddagger Z_i(s)U_i^\dagger$, giving $G_i(s)$ as the positive-feedback interconnection of $\tilde{Z}_i(s)$ and $W_i$.
\lhb{We can therefore conclude that a translation between $PQ$ and $IV$ models can be performed by using a loop transformation involving the matrices $U_i^\dagger$, $U_i^\ddagger$, and $W_i$.}

\begin{figure}[t]
\centering
\begin{tikzpicture}
\small

\node[draw,minimum width =1.25cm, minimum height =0.6cm]
    (sys_Z) at (0,0)
    {$G_i(s)$};

\node[draw=black,
    above=of sys_Z,yshift=-0.8cm,
    minimum width =1.25cm, minimum height =0.6cm]
    (sys_Y0) 
    {$W_i$};

\node[draw, circle, minimum size=0.05cm,
    left=of sys_Z,xshift=0.5cm]
    (left1) 
    {};
    \node[yshift=5pt,xshift=5pt] at (left1.north east){$-$};
    \node[yshift=-5pt,xshift=-5pt] at (left1.south west){$+$};
\node[draw=black,
    left=of left1,xshift=0.5cm,
    minimum width =1.25cm, minimum height =0.6cm]
    (B) 
    {$U_i^\dagger$};
\node[right= of sys_Z,xshift=-0.5cm] (right1){};
\node[draw=black,
    right=of right1,xshift=-0.5cm,
    minimum width =1.25cm, minimum height =0.6cm]
    (Bt) 
    {$U_i^\ddagger$};
\node[right= of Bt,xshift=-0.5cm] (right2){};
\node[left=of B,xshift=0.5cm]
    (neg)
    {};

\node[draw,black,thin,dotted,anchor=north west,
    minimum width =7cm, minimum height =1.7 cm]
    (box) at (-3.6,1.2) {};
    \node at (box.north east)[left=3pt,yshift=-8pt]{$Z_i(s)$};

\draw[-stealth] (neg.east) -- (B.west);
\draw[-stealth] (B.east) -- (left1.west);
\draw[-stealth] (left1.east) -- (sys_Z.west);
    
\draw[] (sys_Z.east) -- (right1.center);
\draw[-stealth] (right1.center) -- (Bt.west);
\draw[-stealth] (Bt.east) -- (right2.center);
\draw[-stealth] (right1.center) |- (sys_Y0.east);
\draw[-stealth] (sys_Y0.west) -| (left1.north);


\end{tikzpicture}
\caption{Relationship between the $PQ$ and $IV$ bus transfer functions.}
\label{fig:tikz_pq_to_iv_fb}
\end{figure}

\section{Unstable poles in \textit{IV} and \textit{PQ} models} \label{sec:instabilities}
We now show that both bus transfer functions in the $IV$ and $PQ$ models can exhibit unstable poles in common bus configurations. This adds complexity \lhb{when performing stability studies or system identification. In particular, Nyquist-based approaches to stability analysis}
require the Nyquist plot of the return ratios $L_{IV}(s) = Z_B(s)Y_N(s)$ or $L_{PQ}(s) = G_B(s)N_{PQ}(s)$ to make a corresponding number of encirclements of the point $-1$ to certify stability. \icl{The latter complicates the derivation of decentralised stability criteria which often relies on the subsystems being stable. The stabilisation of unstable poles in subsystems is achieved in many cases via interconnections with other subsystems thus bringing more centralised aspects in the analysis.}

\subsection{Low frequency poles in \textit{IV} impedances} \label{sec:mod_iv_instabilities}
Consider a bus $\nu_i \in \mathcal{V}_B$ that links frequency to active power and voltage magnitude to reactive power using a droop control law. The entries of the transfer function \eqref{eq:pq_transfer_function} for such a system are
\begin{equation} \label{eq:droop_tf}
    \begin{aligned}
        & G_{\omega p,i}(s) = \tfrac{k_{\omega,i}}{\tau_{\omega,i}s + 1}, &&
        G_{\omega q,i }(s) = 0, \\
        & G_{v p,i}(s) = 0, &&
        G_{v q,i}(s) = \tfrac{k_{v,i}}{\tau_{v,i}s + 1},
    \end{aligned}
\end{equation}
where $k_{\omega , i}, k_{v,i} >0$ are gains and $\tau_{\omega ,i}, \tau_{v,i} \geq 0$ are time constants.

As $G_i(s)$ in this case is marginally stable (due to the integrator in \eqref{eq:pq_transfer_function}), \lhb{using \cref{thm:nyquist_std}}, we can determine the number of right half-plane poles in $Z_i(s)$ \lhb{by counting} the number of encirclements of the point $-1$ by the characteristic loci of $L_{Z,i}(s) = G_i(s)W_i$ as $s$ traverses the modified Nyquist contour \lhb{$\Gamma_N^\mathrm{mod}$ in \eqref{eq:nyquist_contour_mod}.}
In the following Lemma, we evaluate the characteristic loci of $L_{Z,i}(s)$ as $s \rightarrow 0$.

\begin{lemma} \label{lemma:iv_unstable}
Consider a bus $\nu_i \in \mathcal{V}_B$ with $PQ$ transfer function $G_i(s)$ as in \eqref{eq:pq_transfer_function} with entries given by \eqref{eq:droop_tf}. As $s \rightarrow 0$, the eigenvalues of the return ratio $L_{Z,i}(s) = G_i(s)W_i$ with $W_i$ as in \eqref{eq:pq_deriv_3} are given by
\begin{equation} \label{eq:instab_iv_loci}
    \begin{aligned}
        \lambda_-(s) &= -\tfrac{Q_i^\star k_{\omega,i}}{s} + Q_i^\star k_{\omega,i}\tau_{\omega,i} - \tfrac{{P_i^\star}^2 k_{v,i}}{Q_i^\star} + \mathcal{O}(s), \\
        \lambda_+(s) &= Q_i^\star k_{v,i} + \tfrac{{P_i^\star}^2 k_{v,i}}{Q_i^\star} + \mathcal{O}(s).
    \end{aligned}
\end{equation}
\end{lemma}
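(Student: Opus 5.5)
The plan is to compute the characteristic polynomial of $L_{Z,i}(s)$ explicitly and then expand its two roots asymptotically as $s\to 0$. To lighten notation, write $a(s)=\tfrac{k_{\omega,i}}{\tau_{\omega,i}s+1}$ and $b(s)=\tfrac{k_{v,i}}{\tau_{v,i}s+1}$. Then \eqref{eq:pq_transfer_function} with \eqref{eq:droop_tf} gives $G_i(s)=\mathrm{diag}(a(s)/s,\,b(s))$. Multiplying by $W_i$ from \eqref{eq:pq_deriv_3} yields
\begin{equation*}
L_{Z,i}(s)=\begin{bmatrix} -\tfrac{a(s)}{s}Q_i^\star & \tfrac{a(s)}{s}P_i^\star \\ b(s)P_i^\star & b(s)Q_i^\star\end{bmatrix}.
\end{equation*}
Its trace is $T(s)=-\tfrac{a(s)}{s}Q_i^\star+b(s)Q_i^\star$. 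Its determinant is $D(s)=-\tfrac{a(s)b(s)}{s}\big({P_i^\star}^2+{Q_i^\star}^2\big)$. The eigenvalues are therefore the roots of $\lambda^2-T(s)\lambda+D(s)=0$.

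Throughout I assume $Q_i^\star\neq 0$, which is implicit in the statement since $Q_i^\star$ appears in a denominator. Under this assumption $T(s)\sim -k_{\omega,i}Q_i^\star/s$ is unbounded as $s\to0$. Meanwhile $D(s)/T(s)^2=\mathcal{O}(s)\to 0$. This scale separation lets me expand the roots of the quadratic as
\begin{equation*}
\lambda_{\mathrm{small}} = \frac{D}{T}+\mathcal{O}\!\left(\frac{D^2}{T^3}\right), \qquad \lambda_{\mathrm{large}} = T-\frac{D}{T}+\mathcal{O}\!\left(\frac{D^2}{T^3}\right).
\end{equation*}
These follow from $\lambda=\tfrac{T}{2}\big(1\pm\sqrt{1-4D/T^2}\big)$ by a Taylor expansion of the square root. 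Since $D^2/T^3=\mathcal{O}(s)$, the remainders are $\mathcal{O}(s)$.

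Next I compute $D/T$. Multiplying numerator and denominator by $s$ gives
\begin{equation*}
\frac{D}{T}=\frac{a(s)b(s)\big({P_i^\star}^2+{Q_i^\star}^2\big)}{a(s)Q_i^\star-s\,b(s)Q_i^\star}.
\end{equation*}
This is analytic near $s=0$ and equals $k_{v,i}\big({P_i^\star}^2+{Q_i^\star}^2\big)/Q_i^\star+\mathcal{O}(s)$. That is exactly $\lambda_+(s)$ in \eqref{eq:instab_iv_loci}.

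For $\lambda_-(s)$ I expand $a(s)/s=k_{\omega,i}/s-k_{\omega,i}\tau_{\omega,i}+\mathcal{O}(s)$ and $b(s)=k_{v,i}+\mathcal{O}(s)$. This gives $T(s)=-\tfrac{k_{\omega,i}Q_i^\star}{s}+k_{\omega,i}\tau_{\omega,i}Q_i^\star+k_{v,i}Q_i^\star+\mathcal{O}(s)$. Subtracting $D/T$, the $k_{v,i}Q_i^\star$ terms cancel and leave the stated expression for $\lambda_-(s)$.

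The main obstacle is the careful bookkeeping of orders: I must show that the neglected terms are genuinely $\mathcal{O}(s)$ and not $\mathcal{O}(1)$. In particular, the $\mathcal{O}(1)$ correction to $\lambda_{\mathrm{large}}$ must be captured fully by $-D/T$. I would handle this by writing both roots exactly via the square root and bounding the remainder uniformly for small $|s|$ along $\Gamma_\epsilon$.
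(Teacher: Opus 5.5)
Your proof is correct, and it organises the computation differently from the paper.

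\textbf{The paper's route.} It Taylor-expands $\mathrm{tr}(L_{Z,i})$, $\mathrm{tr}(L_{Z,i})^2$ and $\det(L_{Z,i})$ to constant order. It then completes the square in the discriminant with $h(s)=Q_i^\star\big(k_{\omega,i}/s-(k_{\omega,i}\tau_{\omega,i}-k_{v,i})\big)$ and treats the leftover $4{P_i^\star}^2k_{\omega,i}k_{v,i}/s$ term as a perturbation. Both eigenvalues come out of a single expansion of the square root.

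\textbf{Your route.} You factor $T^2$ out of the discriminant and use the scale separation $D/T^2=\mathcal{O}(s)$. You identify the bounded root with $D/T$ (the product of the roots over their sum) and obtain the unbounded root from the trace. This buys less bookkeeping: $D/T$ extends analytically to $s=0$, so only its value there is needed. No higher-order terms of $\mathrm{tr}^2$ or $\det$ ever have to be tracked.

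\textbf{The obstacle you flag is already closed by your own setup.} Rationalising the quadratic formula gives the exact identity
\begin{equation*}
\lambda_{\mathrm{small}}(s)=\frac{2D(s)/T(s)}{1+\sqrt{1-4D(s)/T(s)^2}} .
\end{equation*}
Its right-hand side extends analytically to a neighbourhood of $s=0$, using the principal branch, because $4D/T^2$ is analytic there and vanishes at the origin. Hence $\lambda_{\mathrm{small}}$ is analytic at $s=0$, and $\lambda_{\mathrm{large}}=T-\lambda_{\mathrm{small}}$ has exactly a simple pole there. The $\mathcal{O}(s)$ remainders therefore hold uniformly on a punctured neighbourhood of the origin, in particular on $\Gamma_\epsilon$, with no separate estimate. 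This is also slightly more structural information than the paper's expansion states explicitly.

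Both arguments need $Q_i^\star\neq 0$: the paper uses it implicitly in $h(s)^{-2}$, and you rightly make it explicit.
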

\begin{proof}
In what follows, we omit the subscript $(\cdot)_i$ and superscript $(\cdot)^\star$ in the system parameters to simplify the notation. As $s\rightarrow 0$, the Taylor expansion of $G_i(s)$ gives
\begin{equation*}
    G_i(s) = \begin{bmatrix}
        \frac{k_\omega}{s} - k_\omega \tau_\omega + k_\omega \tau_\omega^2 s& 0 \\ 0 & k_v-k_v\tau_v s
    \end{bmatrix} + \mathcal{O}(s^2).
\end{equation*}
The eigenvalues of $L_{Z,i}(s)$ are given by
\begin{equation} \label{eq:lemma_eig_formula}
    \begin{aligned}
        \lambda_\pm(s) = & \tfrac{1}{2} \left(\mathrm{tr}(L_{Z}(s)) \right. \\
        & \qquad \left. \pm \sqrt{\mathrm{tr}(L_{Z}(s))^2 - 4\det(L_{Z}(s))}\right).
    \end{aligned}
\end{equation}
We have
\begin{equation*}
    \begin{aligned}
        \mathrm{tr}(L_{Z,i}(s)) =
            &-Qk_\omega\frac{1}{s} + Q(k_\omega \tau_\omega + k_v) \\
            &\qquad- Q(k_\omega \tau_\omega^2+ k_v\tau_v)s  + \mathcal{O}(s^2), \\
        \mathrm{tr}(L_{Z,i}(s))^2 =
            & Q^2k_\omega^2\frac{1}{s^2} - 2Q^{\lhc{2}}k_\omega(k_\omega \tau_\omega + k_v)\frac{1}{s} \\
            &\qquad + 2Q^{\lhc{2}}k_\omega(k_\omega\tau_\omega^2 + k_v\tau_v) \\
            &\qquad + \lhc{Q^2}(k_\omega \tau_\omega + k_v)^2 + \mathcal{O}(s), \\
                                \det(L_{Z,i}(s)) =
            & - \left(Q^2 + P^2\right)\left( k_\omega k_v \frac{1}{s} \right. \\
            & \qquad \left.- k_\omega k_v(\tau_\omega + \tau_v) \right) + \mathcal{O}(s).
    \end{aligned}
\end{equation*}
Therefore,
\begin{equation*}
    \begin{aligned}
        (\mathrm{tr}&(L_{Z,i}(s))^2  - 4\det(L_{Z,i}(s)) \\
        & = Q^2 k_\omega^2 \frac{1}{s^2} - 2 Q^2 k_\omega(k_\omega\tau_\omega - k_v)\frac{1}{s} + 4P^2 k_\omega k_v \frac{1}{s} \\
            &\qquad  + Q^2(k_\omega \tau_\omega - k_v)^2 +2Q^2 k_\omega(k_\omega\tau_\omega - k_v\tau_v) \\
            & \qquad -4P^2k_\omega k_v(\tau_\omega+\tau_v) + \mathcal{O}(s) \\
        & = Q^2\left(k_\omega \frac{1}{s} - (k_\omega \tau_\omega - k_v) \right)^2 + 4P^2k_\omega k_v \frac{1}{s} + \mathcal{O}(1).
    \end{aligned}
\end{equation*}
Let
$
    h(s) = Q\left(k_\omega \frac{1}{s} - (k_\omega \tau_\omega - k_v) \right)
$.
Then
\begin{equation*}
    \begin{aligned}
        (\mathrm{tr}(L_{Z,i}&(s))^2  - 4\det(L_{Z,i}(s)) \\
        & = h(s)^2 + 4P^2k_\omega k_v \frac{1}{s} + \mathcal{O}(1) \\
        & = h(s)^2 \left[1 + h(s)^{-2} \left(4P^2k_\omega k_v \frac{1}{s} + \mathcal{O}(s) \right) \right].
    \end{aligned}
\end{equation*}
Now
$
    h(s)^{-2} = \frac{s^2}{Q^2 k_\omega^2} + \mathcal{O}(s^3)
$,
so
\begin{equation*}
    \begin{aligned}
        (\mathrm{tr}(L_{Z,i}(s))^2  &- 4\det(L_{Z,i}(s)) \\
        & = h(s)^2 \left[1 + \frac{4P^2 k_v}{Q^2 k_\omega}s + \mathcal{O}(s^2) \right].
    \end{aligned}
\end{equation*}
Therefore
\begin{equation*}
    \begin{aligned}
        & \sqrt{ \mathrm{tr}(L_{Z,i}(s))^2  - 4\det(L_{Z,i}(s)) } \\
        & \qquad = h(s) \left[1 + \frac{2P^2 k_v}{Q^2 k_\omega}s + \mathcal{O}(s^2) \right] \\
        & \qquad = Qk_\omega \frac{1}{s} - Q k_\omega\tau_\omega + Q k_v + \frac{2P^2 k_v}{Q} + \mathcal{O}(s).
    \end{aligned}
\end{equation*}
Now, evaluating $\lambda_\pm(s)$ using \eqref{eq:lemma_eig_formula} gives \eqref{eq:instab_iv_loci}.
\end{proof}

We therefore see that the characteristic loci split into a bounded and unbounded branch as $s \rightarrow 0$. In particular, for $s = j\omega$, the unbounded branch approaches infinity along the imaginary axis (ignoring $\mathcal{O}(1)$ corrections), with direction determined by the sign of the equilibrium bus reactive power $Q_i^\star$ calculated via \eqref{eq:reactive_power}.

Along \lhb{$\Gamma_\epsilon$} (the infinitesimal indentation of the modified Nyquist contour around the origin),
$
    s = \epsilon e^{j\varphi},  -\frac{\pi}{2} \leq \varphi \leq \frac{\pi}{2},
$
so we obtain
\begin{equation}
    \lambda_-(\epsilon e^{j\varphi}) = -Q_i^\star k_{\omega,i} \epsilon^{-1} e^{-j\varphi} + \mathcal{O}(1).
\end{equation}
Clearly, if $Q_i^\star > 0$, then $\lambda_-(s)$ travels in a semi-circular arc in the left half-plane along the infinitesimal indentation around the origin. For reasonable values of $Q_i^\star$ and $k_{\omega,i}$, this arc is likely to result in a clockwise encirclement \lhb{of} the point $-1$, which (if not counteracted by a counter-clockwise encirclement at higher frequencies) implies that $Z_i(s)$ as calculated via \eqref{eq:iv_from_pq} has an unstable pole.

As an example, consider the case $k_{\omega,i} = 2$, $\tau_{\omega,i} = 1$, $k_{v,i} = 0.5$, $\tau_{v,i} = 2$, $P_i^\star = 1$ p.u., $Q_i^\star = 0.5$ p.u., and ${v_i^{DQ}}^\star = [1,0]^T$ p.u. Clearly $G_i(s)$ is marginally stable. The Nyquist plot of $L_{Z,i}(s) = G_i(s)W_i$ is shown in \cref{fig:plot_iv_unstable} and demonstrates that the unbounded branch encircles the point $-1$ as it traverses the modified Nyquist contour, indicating that the $IV$ model bus impedance $Z_i(s)$ is unstable. Through direct calculation, we find that $Z_i(s)$ has a right half-plane pole at $s = 0.7684$.

\begin{remark}
    Typically, the dynamics of a bus will be more complex than the simple model considered in \eqref{eq:droop_tf} and the full-order system should therefore be investigated to determine if there are any right half-plane poles when converting between the $PQ$ and $IV$ models. However, many bus systems exhibit a time-scale separation in their dynamics which means that the transfer functions \eqref{eq:droop_tf} are sufficiently accurate to describe the behaviour of the system at low frequencies. As the analysis of \cref{lemma:iv_unstable} involves the limit $s \rightarrow 0$, we can conclude that the dominant term in the expression for $\lambda_-(s)$ should hold for systems that are well approximated by \eqref{eq:droop_tf} at low frequencies, meaning the corresponding Nyquist plot may encircle the point $-1$ depending on the values of $Q_i^\ast$ and the other parameters of the system.
\end{remark}

\begin{figure}[t]
    \centering
    \includegraphics[width=0.48\textwidth]{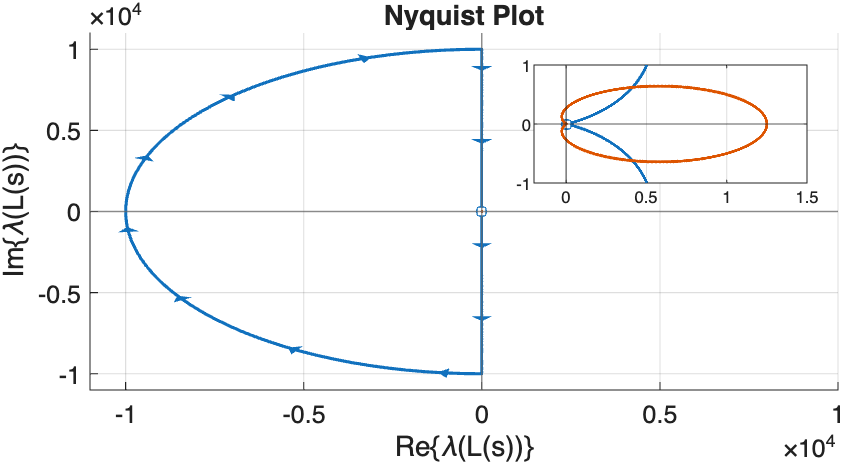}
    \caption{The Nyquist plot of $L_{Z,i}(s) = G_i(s)W_i$ along $\Gamma_N^\mathrm{mod}$ (with indentation of radius $\epsilon = 1\times 10^{-4}$) with entries of $G_i(s)$ given in \eqref{eq:droop_tf}. The inset rescales the plot to show the bounded branch.}
    \label{fig:plot_iv_unstable}
\end{figure}

\subsection{High frequency poles in \textit{PQ} systems}
Now consider a voltage source in series with a passive impedance consisting of $RLC$ components, e.g., a transformer or the low-pass filter of an \lhb{inverter}, as shown in \cref{fig:tikz_series_impedance}. If the voltage source has impedance $Z_i^s(s)$ and the passive impedance has transfer function $Z_i^p(s)$, then the total bus impedance is given by
\begin{equation}
    Z_i(s) = Z_i^s(s) + Z_i^p(s).
\end{equation}
Now, assume that over a certain frequency band (e.g., the resonant frequency of the filter) the passive impedance has much higher gain than the voltage source, so that
\begin{equation} \label{eq:Z_bus_approx_Z_passive}
    Z_i(j\omega) \approx Z_i^p(j\omega)
\end{equation}
in that frequency band. Such a passive impedance composed of $RLC$ elements takes the general form
\begin{equation} \label{eq:gen_passive_compent}
    Z_i^p(s)=\begin{bmatrix}
        a(s) & b(s) \\ -b(s) & a(s)
    \end{bmatrix};
\end{equation}
see, for example, the series $RL$ system given by $Y_{ij}(s)^{-1}$ in \eqref{eq:line_admittance}. As the system is passive, we know that $Z_i^p(s)$ has no unstable poles and is positive-real, i.e., $Z_i^p(j\omega) + Z_i^p(j\omega)^\ast \geq 0$, so that the eigenvalues of $Z_i^p(j\omega)$ have non-negative real part. We now show that when converted to a $PQ$ model using \eqref{eq:pq_from_iv}, the passive component \eqref{eq:gen_passive_compent} may have unstable poles.

\begin{figure}[t]
\centering
\begin{circuitikz}
\small
    \draw
        (0,1.5) 
        to[short] ++(1.5,0)
        to[cute inductor] ++(1,0)
        to[short] ++(0.25,0) 
        to[american resistor] ++(1,0) 
        to[short,-*] ++(2,0)
        to[open,-*] ++(0,-1.5)
        to[short] (0,0)
        to[sV,name=vs] (0,1.5)
        {};         
    \draw
        (4.25,1.5) to[capacitor] ++(0,-1.5) node[ground] 
        {};

    \node[draw, rectangle, dotted, anchor=center,
        label=left:$Z_i^s(s)$, 
        minimum width=1.2cm, 	minimum height=1.2cm]
        (Zs) at (vs.center) {};

    \node[draw, rectangle, dotted, anchor=north west,
        label=above:$Z_i^p(s)$, 
        minimum width=4cm, 	minimum height=2.0cm]
        (Zp) at (1,1.8) {};
   
\end{circuitikz}
\caption{A series connection of a voltage source $Z_i^s(s)$ with a passive impedance $Z_i^p(s)$.}
\label{fig:tikz_series_impedance}
\end{figure}

\begin{lemma} \label{lemma:pq_unstable}
Consider a passive impedance at $\nu_i \in \mathcal{V}_B$ with $IV$ transfer function $Z_i^p(s)$ as in \eqref{eq:gen_passive_compent}. The eigenvalues of the return ratio $L_{G,i}(s) = \tilde{Z}_i^p(s)W_i$, with $\tilde{Z}_i^p(s) = \frac{1}{{V_i^\star}^4}U_i^\ddagger Z_i^p(s)U_i^\dagger$, $U_i^\dagger$ as in \eqref{eq:pq_deriv_1}, $U_i^\ddagger$ as in \eqref{eq:pq_deriv_2}, and $W_i$ as in \eqref{eq:pq_deriv_3}, are given by
\begin{equation} \label{eq:instab_pq_loci}
    \lambda_\pm(s) = \pm \frac{S_i^\star}{ {V_i^\star}^2 } \sqrt{a(s)^2 + b(s)^2},
\end{equation}
where $S_i^\star := \sqrt{ {P_i^\star}^2 + {Q_i^\star}^2 }$.
\end{lemma}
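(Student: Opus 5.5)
The plan is to avoid diagonalising $L_{G,i}(s)$ directly and instead get its trace and determinant from the algebraic structure of the $2\times 2$ factors. Then $\lambda_\pm(s) = \tfrac{1}{2}\bigl(\mathrm{tr}(L_{G,i}) \pm \sqrt{\mathrm{tr}(L_{G,i})^2 - 4\det(L_{G,i})}\bigr)$, as in \eqref{eq:lemma_eig_formula}, gives the result. The claim \eqref{eq:instab_pq_loci} amounts to two identities:
\begin{equation*}
\mathrm{tr}(L_{G,i}(s)) = 0, \qquad \det(L_{G,i}(s)) = -\frac{{S_i^\star}^2}{{V_i^\star}^4}\bigl(a(s)^2+b(s)^2\bigr).
\end{equation*}

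Step 1 (determinant) uses only multiplicativity of $\det$. We have $\det U_i^\dagger = -({v_i^{D\star}}^2 + {v_i^{Q\star}}^2) = -{V_i^\star}^2$ and $\det U_i^\ddagger = -{V_i^\star}^2$. Also $\det Z_i^p(s) = a(s)^2+b(s)^2$ from \eqref{eq:gen_passive_compent}, and $\det W_i = -({P_i^\star}^2+{Q_i^\star}^2) = -{S_i^\star}^2$. Including the scalar factor ${V_i^\star}^{-8}$ from $\tilde Z_i^p$, the product gives the stated determinant: the two $-{V_i^\star}^2$ factors contribute ${V_i^\star}^4$, leaving ${V_i^\star}^{-4}$, and the sign comes from $\det W_i$.

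Step 2 (trace) is the only point needing an idea. I will split real $2\times2$ matrices into two classes. The first is the ``rotation-scaling'' class $\{\alpha I + \beta J\}$. The second is the ``reflection'' class of symmetric traceless matrices $\{\begin{bsmallmatrix} x & y\\ y & -x\end{bsmallmatrix}\}$. These obey a $\mathbb{Z}_2$-graded multiplication rule:
\begin{itemize}
\item rotation $\times$ rotation gives a rotation;
\item rotation $\times$ reflection, in either order, gives a reflection;
\item reflection $\times$ reflection gives a rotation.
\end{itemize}
Each case is checked by a one-line computation, or via $J R = -R J$ for any reflection-class $R$.

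Next I classify the factors. $Z_i^p(s) = a(s) I + b(s) J$ is in the rotation class. $U_i^\dagger$, $U_i^\ddagger$ and $W_i$ are all visibly in the reflection class. Hence $U_i^\ddagger Z_i^p U_i^\dagger$ is reflection $\cdot$ rotation $\cdot$ reflection, which is a rotation. Multiplying by $W_i$ then puts $L_{G,i}(s)$ in the reflection class, so its trace is identically zero for every $s$.

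Step 3 substitutes both identities into the quadratic formula, giving $\lambda_\pm = \pm\sqrt{-\det L_{G,i}} = \pm \frac{S_i^\star}{{V_i^\star}^2}\sqrt{a^2+b^2}$. I will note that the square root is understood as a branch of a complex function, since $a(s), b(s)$ are complex off the real axis, and the $\pm$ pair covers both branches.

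I expect the main obstacle to be Step 2. A brute-force expansion of the trace of a product of four $2\times2$ matrices is messy and error-prone. The rotation/reflection grading is the step I would set up carefully first, verifying the three product rules explicitly, since it makes the cancellation transparent and independent of $v_i^{DQ\star}$.
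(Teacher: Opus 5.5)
Your proof is correct. It establishes the same two facts the paper relies on, $\mathrm{tr}\,L_{G,i}(s)=0$ and $\det L_{G,i}(s) = -{S_i^\star}^2\bigl(a(s)^2+b(s)^2\bigr)/{V_i^\star}^4$, but by a structural rather than computational route.

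The paper multiplies everything out:
\begin{itemize}
\item it finds $\tilde Z_i^p(s) = {V_i^\star}^{-2}\bigl(-b(s)I - a(s)J\bigr)$ and writes $L_{G,i}(s)$ entrywise;
\item it reads off that $L_{G,i}(s)$ is symmetric and traceless;
\item it gets the determinant from the two-square identity $(aP-bQ)^2+(aQ+bP)^2=(P^2+Q^2)(a^2+b^2)$, which is exactly what your multiplicativity-of-$\det$ step packages.
\end{itemize}

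Your grading is cleanest stated as follows. The rotation class is the set of matrices commuting with $J$, and the reflection class is the set anticommuting with $J$. The product rules are then immediate. Any matrix anticommuting with $J$ has zero trace, since $\mathrm{tr}\,M=\mathrm{tr}(J^{-1}MJ)=-\mathrm{tr}\,M$.

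What your route buys is an explanation of why the loci are symmetric about the origin at every operating point. It needs only that $Z_i^p$ commutes with $J$, as in \eqref{eq:gen_passive_compent}. It also shows that an impedance with a component anticommuting with $J$ (a $dq$-asymmetric device) would generically break that symmetry. The paper's route gives explicit matrices $\tilde Z_i^p$ and $L_{G,i}$ that are easy to verify directly, but it hides the reason the trace cancels.

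One small fix: $a(s)$ and $b(s)$ are complex off the real axis. So define your two classes with complex coefficients ($\alpha I+\beta J$ with $\alpha,\beta\in\mathbb{C}$, and complex symmetric traceless matrices) rather than as sets of real matrices. The product rules are bilinear identities, so nothing else changes.
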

\begin{proof}
By direct calculation, we get
\begin{equation*}
    \tilde{Z}_i^p(s) = \frac{1}{ {V_i^\star}^2 } \begin{bmatrix}
        -b(s) & -a(s) \\ a(s) & -b(s)
    \end{bmatrix}.
\end{equation*}
Therefore
\begin{equation*}
    L_{G,i}(s) = \frac{1}{ {V_i^\star}^2 } \begin{bmatrix}
        - a(s)P_i^\star +b(s)Q_i^\star  & -a(s)Q_i^\star-b(s)P_i^\star  \\ -a(s)Q_i^\star - b(s)P_i^\star & a(s)P_i^\star -b(s)Q_i^\star
    \end{bmatrix}.
\end{equation*}
The eigenvalues of $L_{G,i}(s)$ are given by the solutions to $\det(\lambda I-L_{G,i}(s)) = 0$. This gives
\begin{equation*}
    \begin{aligned}
        &\lambda^2 \!-\! \frac{1}{ {V_i^\star}^4 } \left[(a(s)P_i^\star \!-\! b(s)Q_i^\star)^2 \!+\! (a(s)Q_i^\star \!+\! b(s)P_i^\star)^2 \right] = 0 \\
        &\implies \lambda^2 - \frac{{S_i^\star}^2}{ {V_i^\star}^4 } \left[ a(s)^2 + b(s)^2 \right] = 0.
    \end{aligned}
\end{equation*}
Solving the above gives \eqref{eq:instab_pq_loci}.
\end{proof}

\begin{figure}[t]
    \centering
    \begin{subfigure}{0.45\textwidth}
        \centering
        \includegraphics[width=0.95\textwidth]{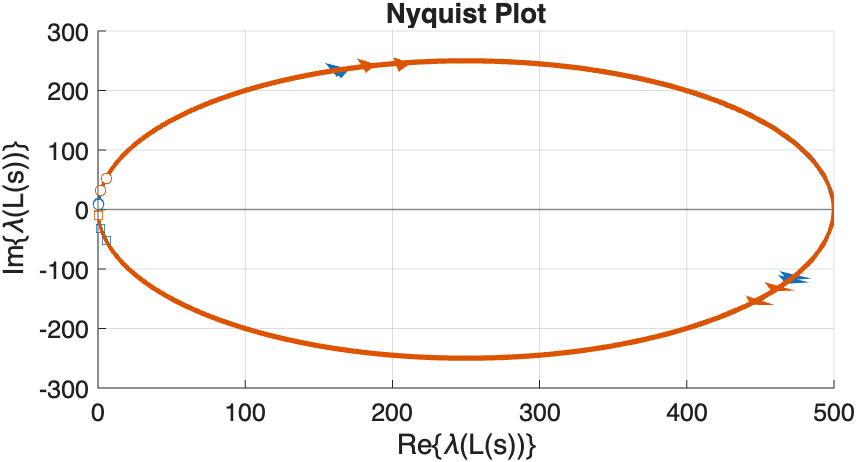}
        \caption{Nyquist plot of $Z_i^p(s)$.}
        \label{fig:plot_Zpassive}
        \vspace*{1em}
    \end{subfigure}
    \\
    \begin{subfigure}{0.45\textwidth}
        \centering
        \includegraphics[width=0.95\textwidth]{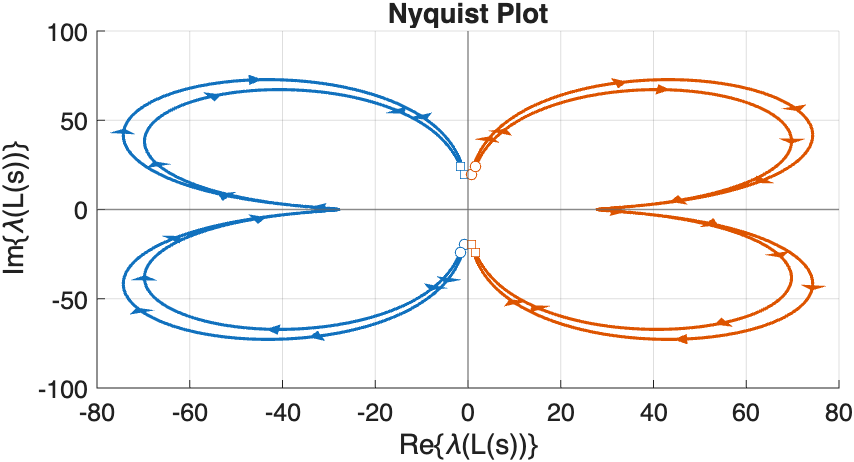}
        \caption{Nyquist plot of $L_{G,i}(s) = -\tilde{Z}_i^p(s)W_i$}
        \label{fig:plot_pq_unstable_sub}
    \end{subfigure}
    \caption{The Nyquist plots of $Z_i^p(s)$ and $L_{G,i}(s) = -\tilde{Z}_i^p(s)W_i$ for the $RLC$ circuit shown in \cref{fig:tikz_series_impedance} over the frequency band \SIrange{4e3}{5e3}{\radian\per\second} (frequencies near the resonant frequency).
    }
    \label{fig:plots_pq_unstable}
\end{figure}

We therefore see that the characteristic loci of $L_{G,i}(s)$ are reflected through the origin. Typically, the Nyquist plot of $Z_i^p(s)$ for a strictly passive circuit forms a large arc in \lhb{$\mathbb{C}_+$} near its resonant frequency. For example, \cref{fig:plot_Zpassive} shows the Nyquist plot of the circuit shown in \cref{fig:tikz_series_impedance} with $R = 0.025$ p.u., $X = 0.25$ p.u., and $B = 0.02$ p.u.~(with nominal grid frequency $\omega_0 = 2\pi \times \SI{50}{\radian\per\second}$) in the frequency range \SIrange{4e3}{5e3}{\radian\per\second}. \Cref{fig:plot_pq_unstable_sub} then shows the corresponding Nyquist plot of $L_{G,i}(s)=-\tilde{Z}_i^p(s)W_i$ (with negative sign to account for the positive-feedback interconnection in \eqref{eq:pq_from_iv}) in the same frequency band using the power and voltage setpoints from the example in \cref{sec:mod_iv_instabilities}. This clearly shows a set of branches symmetric about the origin and \lhb{two clockwise} encirclements of the point $-1$, indicating that
\lhb{the $PQ$ transfer function of the $RLC$ element has two poles in $\mathbb{C}_+$.} By direct calculation, we find that $G_i(s)$ has right half-plane poles at $s = 1.1 \times 10^3$ and $s = 1.6 \times 10^4$.

\begin{remark}
    \lhb{As in \cref{sec:mod_iv_instabilities}, the full bus dynamics including $Z_i^s(s)$ should be considered to conclude the stability properties of $G_i(s)$. However, if the approximation \eqref{eq:Z_bus_approx_Z_passive} holds over a frequency band where $Z_i^p(s)$ undergoes a large phase change (producing a large arc in the Nyquist plot), then \cref{lemma:pq_unstable} indicates that $L_{G,i}(s) = -\frac{1}{{V_i^\star}^4} U_i^\ddagger(Z_i^s(s)+Z_i^p(s))U_i^\dagger W_i$ may trace a corresponding arc in the left half-plane, contributing an encirclement of the point $-1$. }
\end{remark}
\begin{remark}
    \lhb{It should be noted that the configuration shown in \cref{fig:tikz_series_impedance} represents many common scenarios, such as a synchronous generator in series with a transformer, or a grid-forming or grid-following inverter with an $RLC$-based filter. The dynamics of these high-frequency components are often ignored in power system stability studies involving $PQ$ models. However, \cref{lemma:pq_unstable} shows that these components should be \icl{treated} 
    with caution, as they may contribute unstable poles which must be stabilised by the interconnection.
    }
\end{remark}

\subsection{Example Involving More Complex Dynamics}
We now show using a more involved example that the phenomena identified using simplified dynamics in \cref{lemma:iv_unstable,lemma:pq_unstable} hold for more complicated systems.

\begin{figure}[t]
    \centering
    \includegraphics[width=0.48\textwidth]{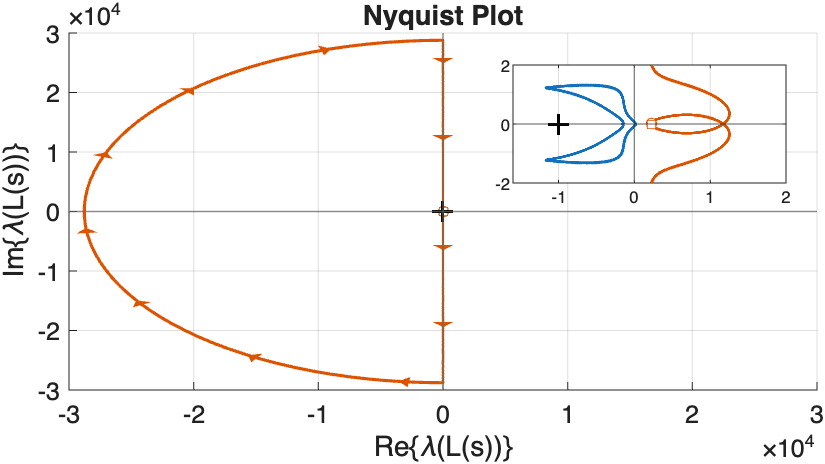}
    \caption{The Nyquist plot of $L_{Z,i}(s) = G_i(s)W_i$ along the modified Nyquist contour (with indentation of radius $\epsilon = 1\times 10^{-4}$ at the origin), where $G_i(s)$ is modelled as a synchronous generator with parameters matching generator 1 in Example 12.6 of \cite{kundur_PowerSystem_22}. The inset rescales the plot to show the bounded branch.}
    \label{fig:plot_kundur_iv}
\end{figure}

Consider a bus $\nu_i \in \mathcal{V}_B$ containing a synchronous generator.
The synchronous generator \lhb{model follows} Example 12.6 of \cite{kundur_PowerSystem_22}, \icl{a widely used example in power system stability analysis. The  parameters} and operating point \icl{match} \lhb{those} at bus 1 in the \lhb{multi-machine} example, \icl{and we also note that the aggregate interconnection of the multi-machine system is stable.}
 The generator is equipped with a thyristor exciter with high transient gain and power system \icl{stabiliser (PSS), 
and the} frequency is controlled using a 4\% droop constant.
The $PQ$ model of the system is linearised about its operating point to obtain a stable transfer function $G_i(s)$.

We can then compute the $IV$ impedance for the generator $Z_i^s(s)$ by constructing the matrices $U_i^\dagger$, $U_i^\ddagger$ and $W_i$ using the generator operating point and using \eqref{eq:iv_from_pq}. By direct calculation, we find that $Z_i^s(s)$ has a right half-plane pole at $s = 5.9$. The fact that there is an unstable pole can be verified by applying \lhb{\cref{thm:nyquist_std}} to $L_{Z,i}(s) = G_i(s) W_i$, as shown in \cref{fig:plot_kundur_iv}. As in the simple case considered in \cref{sec:mod_iv_instabilities}, we see that the unbounded branch of the characteristic loci encircles the point $-1$.


Next, we place the generator impedance $Z_i^s(s)$ obtained above in series with a step-up transformer with impedance $Z_i^p(s)$, as in \cref{fig:tikz_series_impedance}. The transformer is assumed ideal, with parameters matching that linking bus 1 to 5 in Example 12.6 of \cite{kundur_PowerSystem_22}. In addition, we include the shunt capacitance of the Pi-model of the line linking bus 5 to 6 within the impedance $Z_i^p(s)$. The impedance of the aggregate bus can then be calculated as $Z_i^{\mathrm{tot}}(s) = Z_i^s(s)+Z_i^p(s)$, which inherits one right half-plane pole from $Z_i^s(s)$.

\lhb{Using the voltage and current at the transformer interface with the rest of the grid, we compute the operating point quantities needed to construct}
the matrices $U_i^\dagger$, $U_i^\ddagger$ and $W_i$ for the new aggregate bus.
Using these matrices, the aggregate bus $PQ$ transfer function $G_i^{\mathrm{tot}}(s)$ can be computed using \eqref{eq:pq_from_iv}. By direct calculation, we find that $G_i^{\mathrm{tot}}(s)$ has right half-plane poles at $s = 3.1 \times 10^3$ and $s = 1.2 \times 10^5$.

The number of unstable poles in $G_i^{\mathrm{tot}}(s)$ can be verified by applying the multivariable Nyquist criterion to $L_{G,i}(s)= - \tilde{Z}_i^{\mathrm{tot}}(s)W_i$ (with a minus sign to account for the positive-feedback interconnection in \eqref{eq:pq_from_iv}), 
which is shown in \cref{fig:plot_kundur_pq}. Here, we see one counter-clockwise encirclement and two clockwise encirclements, giving a net encirclement of one. As $\tilde{Z}_i^{\mathrm{tot}}(s)$ has one unstable pole, we can therefore conclude that $G_i^{\mathrm{tot}}(s)$ has two unstable poles \lhb{using \eqref{eq:nyquist_pole_count}}.

\begin{figure}[t]
    \centering
    \begin{subfigure}{0.45\textwidth}
        \centering
        \includegraphics[width=0.95\textwidth]{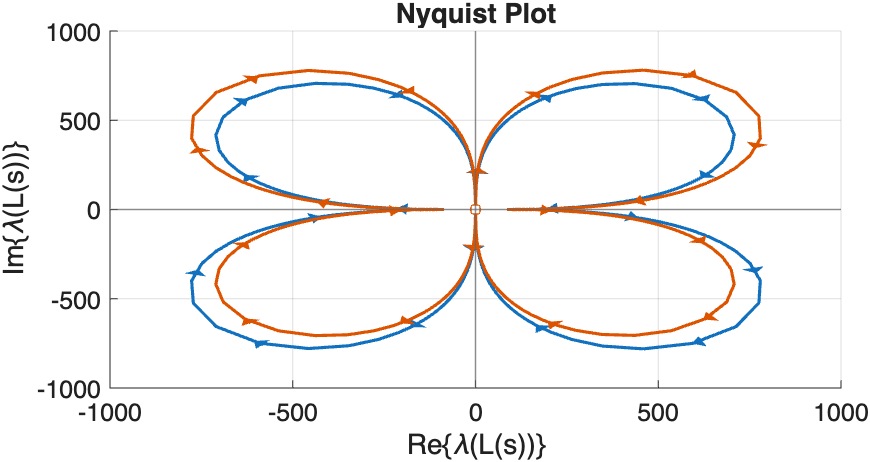}
        \caption{Full Nyquist plot of $L_{G,i}(s)$.}
        \label{fig:plot_kundur_pq1}
        \vspace*{1em}
    \end{subfigure}
    \\
    \begin{subfigure}{0.45\textwidth}
        \centering
        \includegraphics[width=0.95\textwidth]{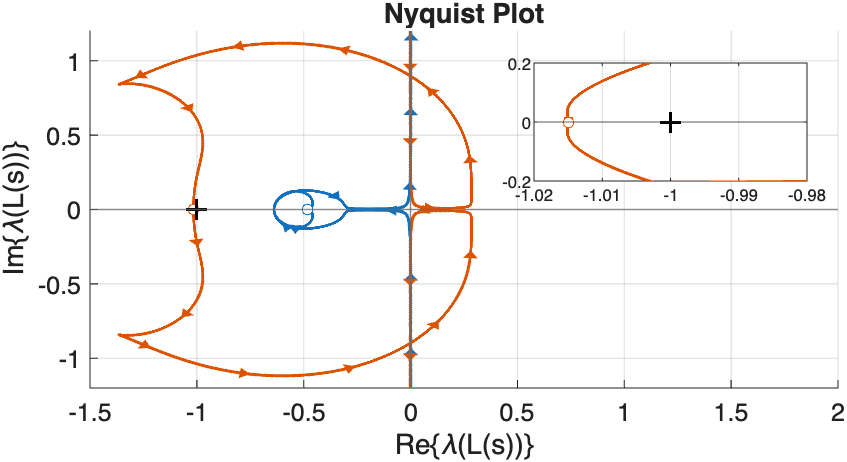}
        \caption{Zoomed in Nyquist plot of $L_{G,i}(s)$.}
        \label{fig:plot_kundur_pq2}
    \end{subfigure}
    \caption{The Nyquist plot $L_{G,i}(s) = -\tilde{Z}_i^{\mathrm{tot}}(s)W_i$ for a synchronous generator in series with a transformer. The first plot shows the full scale of the Nyquist plot, while the second shows a zoomed in picture near the origin, with an inset to show the plot near the point $-1$.
    }
    \label{fig:plot_kundur_pq}
\end{figure}
\section{Conclusion} \label{sec:conclusion}
In this paper, we derived two equivalent small-signal representations of \icl{AC grids, as interconnections of subsystems,} and showed how a loop transformation can be used to translate from one \icl{representation} 
to the other. We then showed analytically and numerically that unstable dynamics may appear in either representation in common scenarios. Our results highlight that
\lhb{simplifying the dynamics of bus components should be done with caution,}
as high frequency dynamics or device setpoints may introduce instabilities that must be accounted for in stability analysis and system identification. \icl{This issue is particularly important when a decentralised stability analysis is carried out for grid-code formulations, as unstable subsystems are much more difficult to be accounted for, relying often on interconnection with other subsystems for their stabilisation.}

\bibliographystyle{IEEEtran}
\bibliography{./Files/references.bib}
\end{document}